\documentclass[runningheads,anonymous]{llncs}

\usepackage[utf8]{inputenc}
\usepackage{graphicx}
\usepackage{amsmath}
\usepackage{amsfonts}
\usepackage{xcolor}
\usepackage{url}
\usepackage{epic,eepic,latexsym,color}
\usepackage{ifthen,graphics,epsfig}
\usepackage{wrapfig}
\usepackage{pdfpages}
\usepackage{shuffle}
\usepackage{xspace}

\newcounter{linecounter}
\newcommand{\linenumbering}{\ifthenelse{\value{linecounter}<10}{(0\arabic{linecounter})}{(\arabic{linecounter})}}
\renewcommand{\line}[1]{\refstepcounter{linecounter}\label{#1}\linenumbering}
\newcommand{\resetline}[1]{\setcounter{linecounter}{0}#1}
\renewcommand{\thelinecounter}{\ifnum \value{linecounter} >
9\else 0\fi \arabic{linecounter}}

\begin{document}

\title{Read-Modify-Writable Snapshots\\from Read/Write operations}

\author{Armando Casta\~neda \and Braulio Ramses Hernández Martínez}

\institute{Universidad Nacional Aut\'onoma de M\'exico (UNAM)}

\maketitle              

\begin{abstract}
In the context of asynchronous concurrent shared-memory systems,
a \emph{snapshot} algorithm allows failure-prone processes 
to concurrently and atomically write on the entries of a shared array $MEM$, 
and also atomically read the whole array. 
Recently, \emph{Read-Modify-Writable} (RMWable) snapshot was proposed,
a variant of snapshot that allows processes 
to perform operations more complex than just read and write,
specifically, each entry $MEM[k]$ is an \emph{arbitrary readable} object. 
The known RMWable snapshot algorithms heavily rely on powerful
low-level operations such as \emph{compare\&swap} or \emph{load-link/store-conditional}
to correctly produce snapshots of $MEM$.
Following the large body of research devoted to understand the limits 
of what can be solved using the simple \emph{read/write} low-level operations, 
which are known to be strictly weaker than 
\emph{compare\&swap} and \emph{load-link/store-conditional},
we explore if RMWable snapshots are possible using only \emph{read/write} operations.
We present two \emph{read/write} RMWable snapshot algorithms,
the first one in the standard concurrent shared-memory model where the number of processes $n$ is finite
and known in advance, and the second one in a variant of the standard model with \emph{unbounded concurrency},
where there are infinitely many processes, but at any moment
only finitely many processes participate in an~execution.

\keywords{Asynchrony \and Concurrent Algorithms \and Fault-Tolerance \and Linearizability  \and 
Lock-freedom \and RMWable Snapshots \and Shared-Memory Systems \and Snapshots \and 
Unbounded Concurrency \and Wait-Freedom}
\end{abstract}

\section{Introduction}
\label{sec:intro}

\paragraph{\bf Context.}

Concurrent fault-tolerant shared-memory algorithms are key to exploit the inherent parallel capabilities of 
modern multicore architectures~\cite{Shavit11}.
In this type of algorithms, a collection of \emph{asynchronous failure-prone} processes communicate each other 
by applying \emph{atomic} low-level operations
on the entries of a shared memory, in order to collectively solve a given distributed problem. 
Low-level operations range from simple
\emph{read} and \emph{write}, to more complex and powerful \emph{read-modify-write} (RMW) operations such as 
\emph{fecth\&add}, \emph{compare\&swap} or \emph{load-link/store-conditional}.

As it is accustomed in the concurrent shared-memory algorithms literature, 
we adopt \emph{linearizability}~\cite{HerlihyW90} as correctness criteria, which
extends the notion of atomicity to algorithms implementing high-level objects,
and focus on \emph{wait-free} concurrent algorithms, which basically avoids the use of \emph{locks}~\cite{Herlihy91}.

\paragraph{\bf RMWable snapshots.}

Several fundamental distributed problems have been identified through time,
one of them being \emph{snapshot}~\cite{AfekADGMS93}. Roughly speaking, this problem
requires producing a copy of a shared array $MEM$ in an \emph{atomic} manner. 
More specifically, a snapshot algorithm provides two high-level operations:
$Update(k,v)$ that atomically writes~$v$ on entry $MEM[k]$, and $Scan()$ that atomically reads all entries of~$MEM$.
In presence of asynchrony (and absence of locks), 
the main challenge consists in designing a $Scan$ operation that is able to take an atomic snapshot of~$MEM$
despite concurrent $Update$ operations executed at unpredictable times and different speeds.

\emph{Read-Modify-Writable} (RMWable) snapshot~\cite{BashariCW24,JayantiJJ24} is a recently proposed variant of snapshot where 
the shared memory allows processes to perform operations more complex than just read and write.
In particular, each $MEM[k]$ is an \emph{arbitrary} object: 
$MEM[k]$ can be a single entry of the shared memory on which atomic low-level operations can be applied
(i.e., \emph{read, write}, or a \emph{RMW} operation),
or a linearizable high-level object implementing a complex concurrent data structure (e.g., a priority queue or a hash table).
It is assumed that each $MEM[k]$ is \emph{readable}, namely, it provides a read operation that returns its current state,
which enables algorithms that produce atomic snapshots of $MEM$.
While  $Scan$ remains the same, i.e., it atomically reads all object states in $MEM$,
$Update(k,op,args)$ now executes operation $op(args)$ on object~$MEM[k]$.

An interesting property of the RMWable snapshot algorithms proposed in~\cite{BashariCW24,JayantiJJ24} is that
the functionality for updating and scanning is \emph{decoupled} from the shared memory $MEM$ that is to be snapshotted.\footnote{For efficiency, $Update$ is split into two operations in~\cite{BashariCW24,JayantiJJ24} 
(more on this in the related work subsection).}
That is, the algorithms are designed to correctly work with 
any \emph{already existing} memory~$MEM$, which makes them oblivious to the actual objects in the shared memory.
This differs from virtually all previous snapshot algorithms in the literature
(e.g.,~\cite{AfekADGMS93,AttiyaGR08,BashariW21,ImbsR12,WeiBBFR021}), 
where the snapshot algorithm is in charge too of
\emph{implementing} the operations provided by each $MEM[k]$, 
hence \emph{simulating} the whole memory $MEM$. 
This property makes the algorithms in~\cite{BashariCW24,JayantiJJ24} versatile as they allow, 
for example, checkpointing existing concurrent algorithms
without the need of major modifications in their code (basically, every operation performed on an object in $MEM$
needs to be replaced with an invocation to $Update$ of a RMWable implementation with appropriate inputs).

\paragraph{\bf RMWable snapshots from read/write operations.}

Loosely speaking, the algorithms in~\cite{BashariCW24,JayantiJJ24} use 
\emph{compare\&swap} or \emph{load-link/store-conditional} 
to keep, at all times, a consistent snapshot of the state objects in $MEM$, 
stored in a separate shared memory~$MEM'$ 
(while the size of $MEM'$ is the same as $MEM$ in~\cite{JayantiJJ24},
the size of $MEM'$ is larger that the size of $MEM$ in~\cite{BashariCW24}). 
In this way, $Scan$ somehow can efficiently retrieve a snapshot of $MEM$ from $MEM'$.
Correctness of the algorithms rely heavily on the high coordination power of \emph{compare\&swap} and \emph{load-link/store-conditional}.
These operation are known to have \emph{infinite} coordination power,
which means that they allow any number of processes to coordinate and solve any distributed problem~\cite{Herlihy91}.

It is already known that implementing some high-level objects, for example, queues and stacks, 
require the use of low-level operations with high coordination power~\cite{Herlihy91}.
Are powerful low-level operations needed to snapshot collections of arbitrary readable objects, namely,
for implementing RMWable snapshots? 
Is it possible to implement this type of snapshots using only the simplest low-level operations,
\emph{read} and \emph{write}, whose coordination power is the weakest possible, 
hence not strong enough for even let two processes solve the well-known consensus problem~\cite{Herlihy91}?
These are the questions that motivate this work. 

Concretely, we seek for RMWable snapshot algorithms
whose functionality for updating and scanning $MEM$ is based purely on \emph{read/write} operations, 
hence necessarily decoupled from $MEM$ itself, since these operations are too weak to implement arbitrary objects.
Therefore, our study here follows the large body of research devoted to understand the limits 
of what can be solved using the simple \emph{read} and \emph{write} operations 
available in a concurrent system 
(see, for example, textbooks~\cite{AttiyaW04,HerlihyKR13,HerlihyS08,Lynch96,Raynal13}).

\paragraph{\bf Results.}

First, we present a \emph{read/write} RMWable snapshot algorithm
in the standard asynchronous shared-memory model, where the number of processes~$n$ is finite
and known in advance (as in~\cite{BashariCW24,JayantiJJ24}). 
A benefit of our algorithm being \emph{read/write}, is that, due to the simulation in~\cite{AttiyaBD95}, 
it implies RMWable snapshots solutions in asynchronous message-passing systems where less than $n/2$
processes can crash.
We stress that the previous RMWable snapshot algorithms~\cite{BashariCW24,JayantiJJ24}
do not imply the existence of solutions in such message-passing systems:
they use strong low-level operations 
that cannot be simulated in those systems, operations that require 
consensus to be possible, which is not true in presence of asynchrony and failures~\cite{FischerLP85,Herlihy91}.

Then, differently from~\cite{BashariCW24,JayantiJJ24}, we consider a variant of the model 
with \emph{unbounded concurrency}~\cite{MerrittT13}, where there are infinitely many processes 
but at any moment only finitely many processes participate in an execution. 
The second result in the paper is a \emph{read/write} RMWable snapshot algorithm in this model.
 Our interest in unbounded concurrency is that it naturally requires algorithms 
 to adapt to the number of processes currently participating,
preventing them to rely on the total number of processes in the system, hence guaranteeing progress
even if processes keep joining an execution.

We stress that restricting to \emph{read/write} operations, hence decoupling functionality from $MEM$, 
implies that standard techniques in previous snapshot algorithms do not
apply, at least directly, in our context. For example, the \emph{double clean collect}~\cite{AfekADGMS93} technique that
appears in several \emph{read/write} snapshot algorithms cannot be used in this context,
since object states in $MEM$ cannot be tagged with \emph{timestamps}. 
Similarly, the approaches in~\cite{BashariCW24,JayantiJJ24} are impossible 
due to the weak coordination power of \emph{read/write} operations.

\paragraph{\bf More related work.}

Snapshot algorithms can be partitioned into two classes~\cite{AfekADGMS93}:
\emph{multi-writer}, where each process can update every $MEM[k]$,
and \emph{single-writer}, where size of $MEM$ is equal to the number of processes
and each process is assigned to a unique entry of $MEM$,
which is the only entry it can update.
The algorithms presented here fall into the more general multi-writer class.

Numerous  single-writer or multi-writer \emph{read/write} snapshot algorithms have been proposed
in concurrent models with finite number of processes
(e.g.,~\cite{AfekADGMS93,AttiyaR98,ImbsR12}). As far as we know, all these algorithms implement the memory $MEM$ that is snapshotted.
Basically, each $MEM[k]$, in addition to data, stores data control that allows the snapshot algorithm 
to consistently take snapshots of the data encoded in $MEM$.
A typical technique in these algorithm is that of storing \emph{timestamps} as data control, which are helpful 
for detecting if an entry of $MEM$ has changed.
Generally, these algorithms are unrealistic, since they assume large shared variables
that can store a snapshot itself (plus data and data control).
The algorithms presented here do not require large shared variables.

There are snapshot algorithms that avoid this issue by using 
more powerful low-level operations such as \emph{compare\&swap} or \emph{load-link/store-conditional}
(e.g.,~\cite{AttiyaGR08,BashariW21,FatourouK07,RianyST01}).
Most of them implement $MEM$ too,
but more efficiently, as they exploit the coordination power of those low-level operations. 
The snapshot algorithm in~\cite{Jayanti05}, based on \emph{load-link/store-conditional}, is time and space optimal,
and its functionality is decoupled from the memory $MEM$ that is snapshotted.

Two algorithms provide weaker forms of RMWable snapshot~\cite{Jayanti02,WeiBBFR021}. 
The snapshot algorithm in~\cite{WeiBBFR021} uses \emph{compare\&swap} low-level operations to
simulate a memory $MEM$ whose entries provide only \emph{read} and \emph{compare\&swap} high-level operations.
The first snapshot algorithm we are aware of that is decoupled from $MEM$
appears in~\cite{Jayanti02}. This algorithm is based on \emph{load-link/store-conditional} and does not provide 
a snapshot, instead it atomically applies a function $f$ whose input is the object states of a snapshot of $MEM$ 
and returns the result (a single value).

Therefore, the difficulty we face here is that 
(1) all data control of a RMWable snapshot algorithm cannot be stored together with object states in $MEM$, and
(2) \emph{read/write} operations provide limited capabilities for process coordination.

For sake of efficiency, some snapshot or RMWable snapshots algorithms provide 
\emph{partial} snapshots~\cite{AttiyaGR08}, namely, they are able to snapshot indicated entries of~$MEM$;
other snapshot algorithms split $Scan$ into two operations~\cite{BashariCW24,BashariW21,JayantiJJ24,WeiBBFR021}, 
one that takes a reference to a snapshot, and other that retrieves the snapshot itself.
Since we focus on computability, we do not study these variants.

It has been shown that snapshots are possible in the model with unbounded concurrency, 
using only \emph{read/write} operations~\cite{GafniMT01}.
RMWable snapshots have not been studied in this context.
The reader is referred to~\cite{Aguilera04} for a survey on the research in models with infinitely many processes.

\paragraph{\bf Structure of the paper.}
The rest of the paper is organized as follows. 
Section~\ref{sec:prelim} introduces the model of computation, 
as well as the notions of linearizability and wait-freedom,
and the definition of RMWable snapshots.
Section~\ref{sec:single} presents a restricted solution for RMWable snapshot, 
that guarantees correctness as long as there are no concurrent updates. 
This simple algorithm serves to explain the main ideas the general solution in Section~\ref{sec:multi} builds on.
Then, Section~\ref{sec:unbounded} presents the second RMWable algorithm, 
which works in the model with unbounded concurrency.
The paper concludes in Section~\ref{sec:final} with a final discussion and possible future directions.

\section{Preliminaries}
\label{sec:prelim}

\paragraph{\bf Model of computation.}
We consider a standard concurrent system with $n$ \emph{asynchronous} processes, denoted
$p_1, \hdots, p_n$, which may \emph{crash} at any time during an execution of the system
(see for example~\cite{AfekADGMS93,BashariCW24,Herlihy91,JayantiJJ24}). 
It is assumed that all processes but one can crash in an execution,
namely, a process that crashes stops taking steps.  
The \emph{index} of process $p_i$ is~$i$.
Processes communicate with each other by invoking \emph{atomic}
operations on shared \emph{base objects}.  A base object can provide
atomic \emph{read/write} operations
(such an object is henceforth called a \emph{register}), 
or more powerful atomic \emph{Read-Modify-Write} (RMW)
operations, such as \emph{fetch\&add} and \emph{compare\&swap}.
Base-objects providing only \emph{read/write} operations are called \emph{registers}.
We focus on systems where processes communicate through registers.

An \emph{implementation} of a (high-level) sequential object object $T$ 
(e.g., a queue or a stack) is a distributed algorithm
$\mathcal A$ consisting of local state machines $A_1, \hdots, A_n$.  Local
machine $A_i$ specifies which operations on base objects $p_i$
executes in order to return a response when it invokes a high-level
operation of $T$.  Each of these base object operations is
a \emph{step}. 

An \emph{execution} of $\mathcal A$ is a possibly infinite sequence of
steps, namely, base objects operations, plus invocations
and responses to high-level operations of $T$,
with the usual well-formedness properties, i.e., processes are \emph{sequential}:
a process can invoke a new high-level operation only if its previous invocation has a response.

In an execution of $\mathcal A$, 
an operation is \emph{complete} if both its invocation and response appear in the execution;
an operation is \emph{pending} if only its invocation appears in the execution;
a process is \emph{correct} if it takes infinitely many steps in the execution.

We say that $\mathcal A$ is \emph{wait-free}~\cite{Herlihy91} if in every execution, 
every non-crashed process completes all its operations 
(hence every correct process completes infinitely many operations).
We say that $\mathcal A$ is \emph{lock-free}~\cite{Herlihy91} if in every execution, 
infinitely many operations are complete
(hence at least one non-crashed process completes infinitely many operations).
Thus, any wait-free implementation is lock-free, but the opposite is not necessarily true.

\emph{Linearizability} is the standard correctness condition for 
concurrent implementations of an object defined by a sequential specification. 
The reader is referred to~\cite{HerlihyW90} for the detailed definition of linearizability.
For our purposes here, the following simplified definition based on \emph{linearization points} suffices.

We say that a distributed algorithm $\cal A$ is a \emph{linearizable} implementation 
of a sequential object $T$ if the next holds in each of its finite executions:
for all complete operations and some pending operations,
it is possible to find unique points and append responses to those pending operation, one per operation
and each point laying between the invocation and response of the associated operation, 
such that the linearization points induce a sequential execution that is valid for 
the sequential object $T$.
Figure~\ref{fig:linearizable-stack} contains an example of a 3-process
linearizable concurrent execution of a stack.

While Sections~\ref{sec:single} and~\ref{sec:multi} consider the model just stated with finite $n$,
Section~\ref{sec:unbounded} considers the~\emph{unbounded concurrency}~\cite{MerrittT13} version of the model,
where $n = \infty$ but any finite prefix
of an infinite execution has a finite number of \emph{participating} processes, namely, 
the processes that have taken at least one step in the prefix.

 \begin{figure}[t]
\centering{
   \includegraphics[scale=0.4]{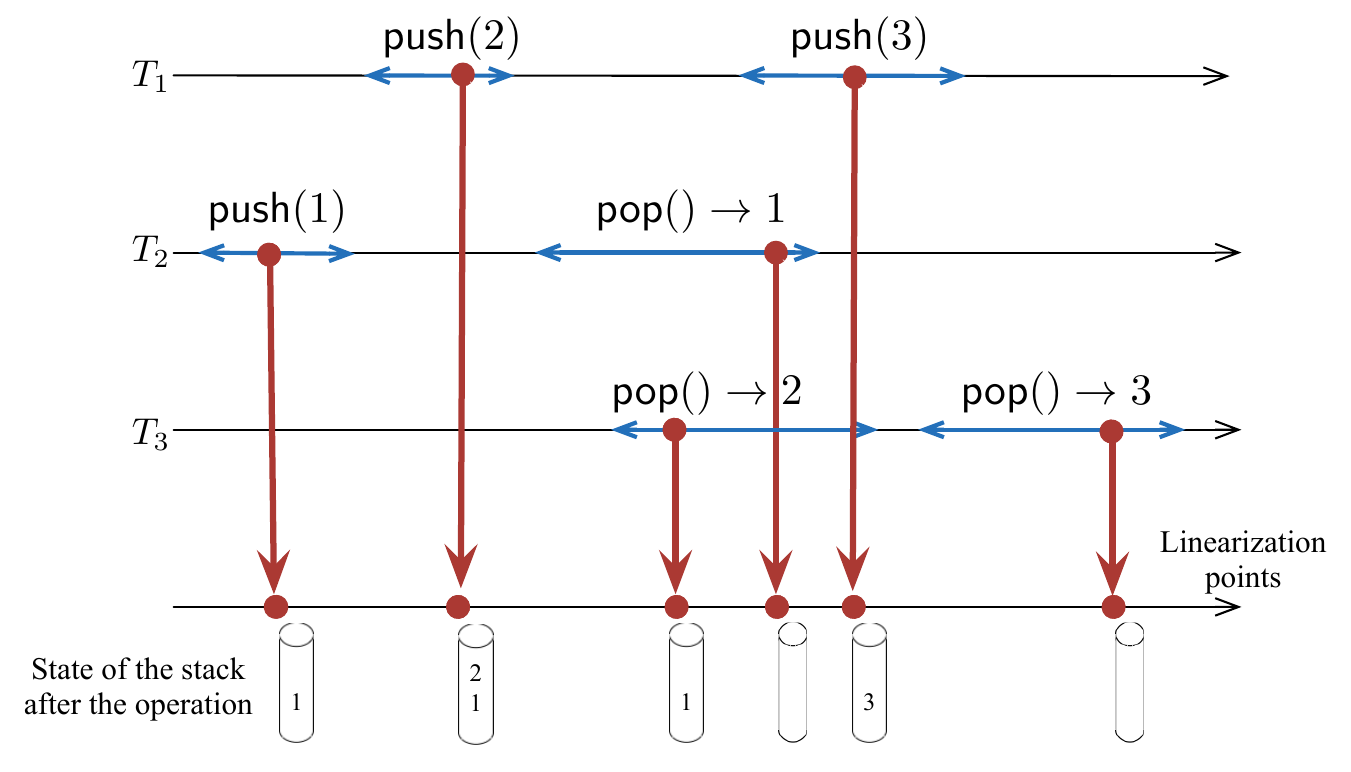}
\caption{A linearizable stack execution for three processes, $T_1, T_2$ and $T_3$. 
Horizontal double-headed arrows denote the time that elapses between the invocation and response of a thread's operation.
The linearization points induce a sequential execution that is valid for the stack object.}
\label{fig:linearizable-stack}
}
\end{figure}

\paragraph{\bf RMWable snapshots.}
The classic (multi-writer) \emph{snapshot} sequential object~\cite{AfekADGMS93} allows processes to concurrently 
and atomically write in the entries of a shared array, 
and also atomically read the whole array. 
More specifically, the state of the snapshot sequential object is an array $MEM[1, \hdots, m]$,
which is modified through two operations: $Update(k,v)$, that atomically places $v$ in $MEM[k]$, where 
$1 \leq k \leq m$ and $v$ is taken from the space of values that can be written in the entries of $MEM$;
and $Scan()$, that atomically reads all values in $MEM$
and returns the read values.

We are interested in \emph{Read-Modify-Writable} (RMWable) snapshot~\cite{BashariCW24,JayantiJJ24}, 
a generalization of the snapshot object described above. 
In this generalization, each entry $MEM[k]$ stores an arbitrary
\emph{atomic}, or \emph{linearizable}, object (for example, a linearizable stack
or an atomic base-object providing a \emph{fetch\&increment} operation).
Each object $MEM[k]$ is assumed to be \emph{readable}, namely, 
it provides a \emph{read} operation that returns the current state of $MEM[k]$.
We want to rule out RMWable snapshot algorithms that \emph{implement} each object in $MEM[k]$
(as in several snapshot snapshot implementations, 
e.g.~\cite{AfekADGMS93,AttiyaGR08,BashariW21,ImbsR12,WeiBBFR021}).
That is to say, we are interested in snapshot linearizable implementations whose code in $Update/Scan$ 
is \emph{decoupled} from the array $MEM$.  
Therefore, $MEM$ is an input parameter for the $Update$
and $Scan$ operations. 

More in detail, we consider that 
the RMWable snapshot sequential object provides two operations:
$Update(MEM,k,op,args)$, that atomically executes operation $op(args)$ in $MEM[k]$, 
and $Scan(MEM)$, that atomically reads the states of all objects in $MEM$ and returns the read states.

Although we seek for RMWable snapshot algorithms whose code is decoupled from $MEM$, we let the algorithms
to rely on the assumption that $MEM$ is the same in an execution, namely, 
$MEM$ \emph{can only change from execution to execution}.
Therefore, we focus on implementations that exhibit linearizability in executions
 in which \emph{all invocations} to $Update$ and $Scan$ receive \emph{the same shared array} $MEM$ as input parameter.

\section{A simple solo-updater solution}
\label{sec:single}

We present first a simple solution for the \emph{solo-updater} case, where it is assumed that $Update$ operations 
\emph{do not execute concurrently}. 
This solution, which appears in Algorithm~\ref{fig:solo-updater}, exposes the main ideas 
the general \emph{concurrent-updater} algorithm in the next section is based on. 
First, we consider the version of the algorithm without shared variable $H$ and without lines \ref{A3}, \ref{A7}, \ref{A9} and~\ref{A14}. 
This algorithm is just lock-free, and once its correctness is settled, it will be clear that the removed lines make the algorithm wait-free.

Algorithm~\ref{fig:solo-updater}, as well as algorithms coming after, uses the \emph{non-atomic} function $Collect(M)$
that, given a finite shared array $M$ of readable objects, asynchronously reads one by one, in some arbitrary order,
the entries of $M$ and returns an array with the read values.

In the lock-free algorithm, the processes share a variable $T$ that measures the progress of the process that
is currently modifying $MEM$. In $Update$, the process announces the beginning of its operation by incrementing $T$, 
then modifies $MEM[k]$, afterwards it announces the end of its operation by again incrementing $T$, 
and finally returns the result of the operation on $MEM[k]$. 
In $Scan$, the process performs a sequence of collects of $MEM$, 
each collect sandwiched by two reads of $T$, until it detects in line~\ref{A13} that its current collect is a snapshot.

Proving linearizability of the lock-free algorithm is simple. 
The condition in line~\ref{A13} seeks for an interval of time $I$, 
that lapses from the read in line~\ref{A10} to the read of line~\ref{A12} in the current iteration of the loop, 
where $MEM$ is modified \emph{at most once}. 
Since $T$ is initialized to $0$ and it is assumed that one process executes $Update$ at a time, 
it follows that:
\begin{itemize}
\item if $t$ and $t'$ are equal to the same even value, then it is certain that no process modifies $MEM$ in interval $I$;

\item in all other cases, namely, $t$ and $t'$ are equal to the same odd value, or their difference is one, 
$MEM$ is modified at most once in $I$.
\end{itemize}

The previous simple observation gives that the collect obtained in line~\ref{A11} can be linearized: 
\begin{itemize}
\item if $MEM$ is not modified, the $Scan$ operation can be linearized at 
any read step of the collect;

\item otherwise it is linearized at any step before or after the modification of $MEM$, depending if the collect
reads $MEM$'s modified component before or after the modification (see Figure~\ref{fig:snapshot}).
\end{itemize}

 %==================================================================
\begin{figure}[t]%algorithm}[htb]
\centering{ \fbox{
\begin{minipage}[t]{150mm}
\scriptsize
\renewcommand{\baselinestretch}{2.5} \resetline
\begin{tabbing}
aaa\=aa\=aa\=aa\=aa\=aa\=aa\=\kill %~\\

{\bf Shared Variables:}\\

$~~$  $T:$ register, initialized to $0$\\

$~~$  {\color{red} $H:$ register, initialized to $\bot$}\\ \\

{\bf Operation} $Update(MEM, k, op, args)$:\\

\line{A1} \> \> $t  \leftarrow T$ \\

\line{A2} \> \> $T \leftarrow t+1$ \\

\line{A3} \> \> {\color{red} $H \leftarrow Collect(MEM)$}\\

\line{A4} \> \> $res \leftarrow MEM[k].op(args)$\\

\line{A5} \> \> $T \leftarrow t+2$ \\

\line{A6} \> \> {\bf return} $res$ \\

{\bf End} $Update$\\ \\

{\bf Operation} $Scan(MEM)$:\\

\line{A7} \> \> {\color{red} $moves \leftarrow 0$}\\

\line{A8} \> \> {\bf while $\sf true$ do} \\

\line{A9} \> \> \> {\color{red} {\bf if} $moves \geq 4$ {\bf then return} $H$ }\\

\line{A10} \> \> \> $t \leftarrow T$\\

\line{A11} \> \> \> $s \leftarrow Collect(MEM)$\\

\line{A12} \> \> \> $t' \leftarrow T$\\

\line{A13} \> \> \> {\bf if} $t' - t \leq 1$ {\bf then return} $s$ {\bf end if}\\

\line{A14} \> \> \> {\color{red} $moves \leftarrow moves + t' - t$}\\

\line{A15} \> \> {\bf end while} \\

{\bf End} $Scan$
\end{tabbing}
\end{minipage}
  }
\caption{A solo-updater algorithm. Local algorithm for process $p_i$.
Local variables of the process are denoted in lowercase letters.
While the algorithm without the lines marked in red is only lock-free, 
the algorithm with those lines is wait-free.}
\label{fig:solo-updater}
}
\end{figure}%algorithm}
%=================================================================

\begin{theorem}
\label{theo:solo-updater-lf}
Algorithm~\ref{fig:solo-updater} without shared variable $H$ and without lines \ref{A3}, \ref{A7}, \ref{A9} and \ref{A14} 
is a read/write, linearizable, 
lock-free, solo-updater implementation of the RMWable snapshot object.
\end{theorem}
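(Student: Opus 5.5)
The proof splits into three parts: that the algorithm uses only reads and writes, that it is lock-free, and that it is linearizable.

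\emph{Read/write.} This is immediate by inspection. The shared variable $T$ is a register accessed only by reads (lines~\ref{A1}, \ref{A10}, \ref{A12}) and writes (lines~\ref{A2}, \ref{A5}). $MEM$ is accessed only through the readable interface in $Collect$ and through the operation being implemented in line~\ref{A4}.

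\emph{An invariant on $T$.} Because updates never overlap and $T$ starts at $0$, induction over the sequence of updates gives the following. The $j$-th update (counting from $1$) reads $t=2(j-1)$ in line~\ref{A1}, writes $2j-1$ in line~\ref{A2}, and writes $2j$ in line~\ref{A5}. So $T$ is non-decreasing and changes by exactly one at each write. Moreover, $T$ is odd exactly while some update is between lines~\ref{A2} and~\ref{A5}, and the modification of $MEM$ in line~\ref{A4} of the $j$-th update happens while $T=2j-1$.

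\emph{Lock-freedom.} Suppose an execution has only finitely many complete operations. Each $Update$ is wait-free, since it takes a constant number of steps plus a single operation on a linearizable (assumed wait-free) object. So from some point on only $Scan$ operations take steps, and some scan takes infinitely many steps. From then on at most one pending update can still write $T$, so $T$ changes only finitely often. Hence some iteration of the scan has $t=t'$, and it returns, a contradiction.

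\emph{Linearizability.} Take a finite execution. Linearize each complete update, and each pending update that has executed line~\ref{A4}, at the linearization point of its operation on $MEM[k]$ inside line~\ref{A4}. Pending updates that have not reached line~\ref{A4} are dropped. Updates are then ordered by their real-time order, which agrees with the order of their $T$ values.

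For a scan that returns $s$ in an iteration with reads $t$ and $t'$, consider the interval $I$ from the read in line~\ref{A10} to the read in line~\ref{A12}. By the invariant, $t'-t\le 1$ implies that at most one line~\ref{A4} modification of $MEM$ occurs inside $I$.
\begin{itemize}
\item If $t=t'$ is even, or if $t$ is even and $t'=t+1$, then no line~\ref{A4} lies in $I$. In the second case only a line~\ref{A2} write occurs in $I$. Every entry read by the collect then equals the state of $MEM$ at the first read step, and the scan is linearized there.
\item If $t$ is odd (so $t'\in\{t,t+1\}$), at most the $(t+1)/2$-th update's line~\ref{A4} falls inside $I$, and it modifies only $MEM[k]$. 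If the collect read $MEM[k]$ before that modification, or if there is no modification in $I$, the scan is linearized just before the modification (or at the start of $I$). Otherwise it is linearized right after the modification. In either case every value read equals the corresponding entry's state at the chosen point.
\end{itemize}
Linearizability of each $MEM[k]$ is what lets its read and modify steps be treated as atomic points.

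The main obstacle is making this case analysis rigorous. Each object in $MEM[k]$ is itself only linearizable, so its read inside the collect and the $op$ in line~\ref{A4} may overlap. I will handle this by taking the linearization points of the objects in $MEM$ as the atomic events. With those points fixed, the argument above applies verbatim, and the induced sequential history is a valid RMWable snapshot history.
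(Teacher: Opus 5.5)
Your overall plan is the paper's: the behaviour of $T$ allows at most one modification of $MEM$ inside the interval, and the scan is placed just before or just after it. However, the first bullet of your case analysis is false. For $t$ even and $t'=t+1$ you assert that no line~\ref{A4} step lies in $I$ and that ``only a line~\ref{A2} write occurs in $I$''. You then linearize at the first read of the collect. Your own invariant refutes this. The update that writes $t+1$ in line~\ref{A2} (inside $I$) performs line~\ref{A4} while $T=t+1$, and $T$ still equals $t+1$ at the read in line~\ref{A12}. So its modification may fall inside $I$, even in the middle of the collect.

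Concretely, take $MEM=[a,b]$. The scan reads $T=0$ and its collect reads $a$ from $MEM[1]$. An update then writes $T=1$ and changes $MEM[2]$ from $b$ to $b'$. The collect reads $b'$, the scan reads $T=1$, and it returns $[a,b']$. Your rule places the scan at its first read, where the state is $[a,b]$, so the induced sequential history is invalid.

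The fix is local. Only $t=t'$ even guarantees that $MEM$ is untouched during $I$. The case $t$ even, $t'=t+1$ belongs in your second bullet: there is at most one modification, by the $(t/2+1)$-th update, and your before/after placement works verbatim. This is exactly the paper's split (``same even value'' versus ``same odd value or difference one''). The remaining parts are sound and somewhat more careful than the paper's proof: the read/write check, your explicit lock-freedom argument, and treating the linearization points of the objects in $MEM$ as atomic events.
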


\begin{proof}
It is clear that $Update$ is wait-free. 
It is also clear that the only reason a $Scan$ operation of a non-crashed process never finishes, is because 
there are infinitely many $Update$ operations that complete. Hence the algorithm is lock-free. Also, it is obvious 
that the algorithm is read/write.

To prove linearizability, we linearize every $Update$ operation that modifies $MEM$ at its step where it modifies it, i.e., line~\ref{A4};
a pending $Update$ that does not modify $MEM$ is discarded. 
$Scan$ operations that return are linearized as described below; pending $Scan$ operations are discarded.
Recall that, as observed above, the condition in line~\ref{A13} guarantees that $MEM$ is modified
at most one time while the collect $C$ in line~\ref{A11} that obtains the array $s$ returned by the $Scan$ is taken. We have two cases:

\begin{itemize}

\item $MEM$ is not modified while $C$ is taken. In this case, clearly $s$ is a snapshot of $MEM$,
and the operation can be linearized in an step of the collect.\\

\item $MEM$ is modified while $C$ is taken. As already mentioned, in this case $MEM$ is modified exactly once. 
Let $MEM[k]$ the component that is modified while the collect $C$ is taken. The operation is linearized at the step of
$C$ that reads $MEM[k]$. Observe that if $C$ reads $MEM[k]$ before (resp. after) $MEM[k]$ is modified, 
then $s$ contains the states of the objects in $MEM$ before (resp. after) the modification (see Figure~\ref{fig:snapshot}).
\end{itemize}

By construction, the chosen linearization points induce a valid sequential execution of the RMWable sequential object.
Moreover, the linearization point of every operation lies between the invocation and responde of the operation.
Therefore, every execution of the algorithm is linearizable.
\qed
\end{proof}

We now consider full Algorithm~\ref{fig:solo-updater}. 
Shared variable $H$ and lines \ref{A3}, \ref{A7}, \ref{A9} and \ref{A14} implement a \emph{helping} mechanism that makes the algorithm wait-free.
In the style of several snapshot algorithms (e.g.,~\cite{AfekADGMS93,ImbsR12}), an $Update$ operation helps $Scan$ operations
by taking a snapshot itself, which is stored in shared variable~$H$. Due to the assumption that there are no concurrent
$Update$ operations, any collect taken in line~\ref{A3} is a snapshot.
A process $p_i$ executing a $Scan$ records in line~\ref{A14} the number of times, from its perspective, 
$T$ has been incremented from the beginning of its current operation,
and line \ref{A9} indicates to $p_i$ that it can use the snapshot stored in $H$ only if
this number is at least four. Since every complete $Update$ operation increments $T$ two times, 
condition in line \ref{A9} guarantees that the collect 
that takes the snapshot stored $H$ (inside an $Update$) is nested in $p_i$'s current $Scan$ operation.
This property is crucial for obtaining linearizability.
Thus, once this condition is true, $p_i$ is safe to borrow the snapshot stored in~$H$.

\begin{figure}[t]
\begin{center}
\includegraphics[scale=0.25]{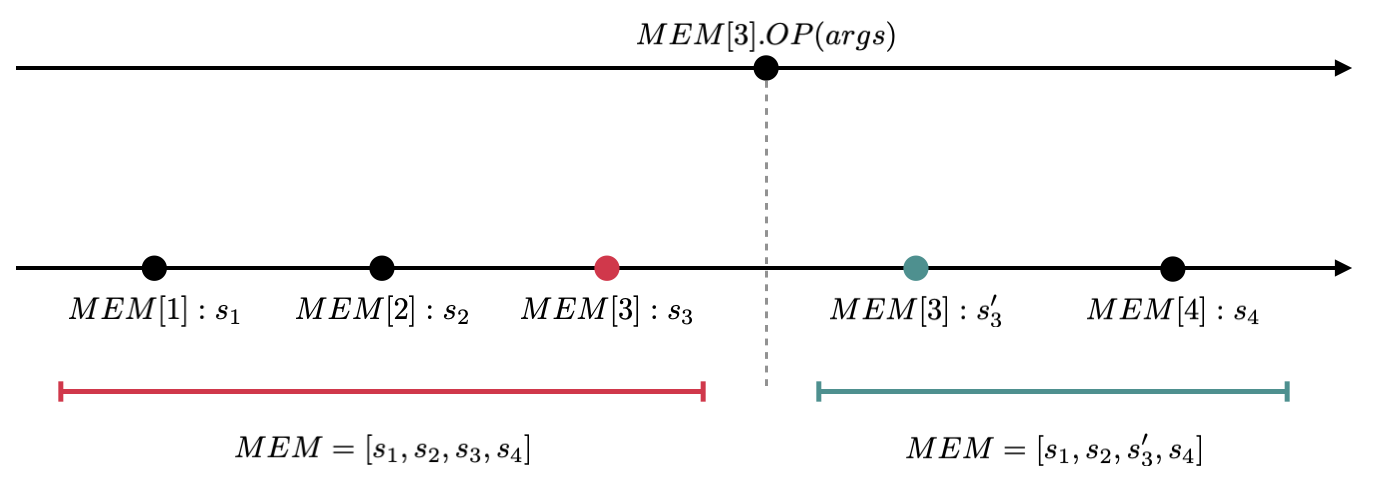}
\caption{If $MEM$ is modified at most once during a collect, the collect is indeed a snapshot. 
In the example, $MEM$ has four objects and while the collect is taken,
a process performs an operation in $M[3]$, changing the state of the object from $s_3$ to~$s'_3$. 
Step $MEM[k].read()$ returning $s$ is simply denoted $MEM[k]:s$. 
If the collect reads $M[3]$ before $M[3]$ is modified, then the values obtained by the collect, $[s_1,s_2,s_3,s_4]$, coincide with the state 
of the objects of $MEM$ during the interval that goes from the first read of the collect to the read of $M[3]$;
if the collect reads $M[3]$ after the modification, the collect obtains the states in $MEM$, $[s_1,s_2,s'_3,s_4]$, 
during the interval that goes from
the read of $M[3]$ to the last read of the collect. In any case, the collect corresponds to a snapshot of the memory.}
\label{fig:snapshot}
\end{center}
\end{figure}

\begin{theorem}
\label{theo:solo-updater-wf}
Algorithm~\ref{fig:solo-updater} is a read/write, linearizable, 
wait-free, solo-updater implementation of the RMWable snapshot object.
\end{theorem}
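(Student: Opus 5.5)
We reduce to Theorem~\ref{theo:solo-updater-lf} and handle only what the red lines change. That the algorithm is read/write is immediate, since line~\ref{A3} only reads $MEM$ and writes $H$. $Update$ remains wait-free: a $Collect$ of the finite array $MEM$ takes finitely many steps, and $MEM[k].op$ is an operation of a (wait-free) object in $MEM$.

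For wait-freedom of $Scan$, we use that $T$ is incremented only by $Update$ operations and that, since updates do not overlap, $T$ is non-decreasing. Consider an iteration of the loop that does not return at line~\ref{A13}. Then $t'-t \geq 2$, so line~\ref{A14} increases $moves$ by at least $2$. Hence after at most two unsuccessful iterations, $moves \geq 4$ and line~\ref{A9} returns. Each iteration consists of two reads and one finite collect, so every $Scan$ finishes within a bounded number of its own steps.

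For linearizability, $Update$ operations are linearized at line~\ref{A4}, exactly as before, and a $Scan$ that returns at line~\ref{A13} is linearized as in the proof of Theorem~\ref{theo:solo-updater-lf}. The new case is a $Scan$ $S$ that returns $H$ at line~\ref{A9}. Let $\tau_0$ be the value of $T$ read at line~\ref{A10} in the first iteration of $S$, and $\tau_1$ the value read at line~\ref{A12} in the last iteration. Then $\tau_1 - \tau_0 \geq moves \geq 4$, and all these increments happen within $S$'s interval.

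We now split on the parity of $\tau_0$. The updates are sequential, so the increments of $T$ from $\tau_0$ to $\tau_1$ come from consecutive updates. An update whose line~\ref{A2} raised $T$ to an odd value $2j+1 > \tau_0$ and whose line~\ref{A5} raised it to $2j+2 \leq \tau_1$ lies entirely inside $S$. This holds both for the second increment of the earlier update and the first of the next. Since $\tau_1 - \tau_0 \geq 4$, at least one full update $U$ has both of its writes to $T$ strictly after $S$'s first read of $T$ and no later than its last read. Moreover, since increments are by one and updates are sequential, the write at line~\ref{A2} of the last complete update before $S$ read $H$ happened at or before the final read of $T$ preceding line~\ref{A9}.

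The main obstacle is arguing that the value $S$ returns from $H$ was written by such a nested update, and not by an earlier update whose write to $H$ was delayed. The updater writes $H$ between its two increments of $T$, and no other update runs concurrently. Let $U'$ be the last update whose line~\ref{A3} write precedes $S$'s read of $H$. We claim $U'$ is either $U$ or a later update. Any write to $H$ by an update preceding $U$ finishes before $U$'s line~\ref{A2}, and $U$'s own write to $H$ precedes its line~\ref{A5}, which precedes $S$'s last read of $T$, which in turn precedes line~\ref{A9}. So the returned value was written by $U$ or by a subsequent update. Any such update starts its line~\ref{A2} after $S$'s first read of $T$, so its collect at line~\ref{A3} lies inside $S$'s interval. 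That collect runs while no update modifies $MEM$, since line~\ref{A4} of the same update comes after it and other updates are not concurrent. Hence it is a snapshot of $MEM$ at any of its steps, and we linearize $S$ at, say, the first read of that collect. This point lies within $S$ and is consistent with the $Update$ linearization points at line~\ref{A4}. Together with Theorem~\ref{theo:solo-updater-lf}'s case analysis, this gives a valid sequential execution.
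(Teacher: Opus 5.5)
Your proof is correct and follows essentially the same route as the paper's. Wait-freedom comes from each failed iteration adding at least $2$ to $moves$. Linearizability of a $Scan$ returning at line~\ref{A9} comes from $T$ advancing at least four times within it, so the array read from $H$ was produced by a line~\ref{A3} collect nested in the $Scan$, and that collect is a snapshot because no update modifies $MEM$ during it. You are more careful than the paper in two places, both of which the paper only asserts: you bound $\tau_1-\tau_0\geq moves$ via monotonicity of $T$, and you explicitly rule out that the value read from $H$ was written by an update preceding the nested one.
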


\begin{proof}
The algorithm is clearly read/write. Due to the conditions in lines~\ref{A9} and~\ref{A13},
every $Scan$ operation executes the loop at most two times. Hence the algorithm is wait-free.

We now prove linearizability. $Update$ operations and $Scan$ operations that return in line~\ref{A13} 
are linearized as in the proof of Theorem~\ref{theo:solo-updater-lf}.
$Scan$ operations that return in line~\ref{A9} are linearized as follows.
Consider any such operation $op$ that returns array $s'$ (read from $H$). 
We argue that there is a collect taken in line~\ref{A3} and returning $s'$
that is nested in $op$. If we prove that, then $op$ can be linearized at any step of that collect,
from which linearizability follows, as every collect stored in $H$ is a snapshot (due to the solo-updater assumption).
The argument is that, since the condition in line~\ref{A9} is true, $T$ is incremented at least four times
from the beginning of $op$ until $p_i$ reads $s'$ from $H$. 
As every complete $Update$ operation increments $T$ two times, 
it follows that the snapshot read by~$p_i$ in line~\ref{A9} is taken by a $Update$ operation
that is nested in $op$. Thus, the linearization point of $op$ lines between its invocation and response.
Therefore, the algorithm is linearizable.
\qed
\end{proof}

\section{The general case}
\label{sec:multi}

We now consider the concurrent-updater case, where concurrent $Update$ operations are allowed.
Algorithm~\ref{fig:concurrent-updater} presents a read/write, wait-free, linearizable RWMable snapshot implementation. 
As in the previous section, we first consider a short version of the solution.
We show that the algorithm without shared array~$H$ 
and without lines~\ref{B3},~\ref{B7},~\ref{B9},~\ref{B21} and~\ref{B22}
is linearizable but only lock-free.

In the lock-free algorithm, each process $p_i$ has a shared variable $T[i]$ that measures the progress of
its $Update$ operations. Basically, $Update$ operates as in the solo-updater lock-free algorithm in the previous section.

In $Scan$, the scanning process performs a sequence of collects of $MEM$, 
each collect sandwiched by two collects of $T$, until the condition in line~\ref{B13} is satisfied.
The discussion in the previous section shows that when this condition is satisfied,
\emph{every} process modifies $MEM$ at most once in the interval of time that goes from the last read of the collect in line~\ref{B10}
to the first read of the collect in line~\ref{B12}, in the current iteration of the loop. 

%==================================================================
\begin{figure}[t]%algorithm}[htb]
\centering{ \fbox{
\begin{minipage}[t]{150mm}
\scriptsize
\renewcommand{\baselinestretch}{2.5} \resetline
\begin{tabbing}
aaa\=aa\=aa\=aa\=aa\=aa\=aa\=\kill %~\\

{\bf Shared Variables:}\\

$~~$  $T[1,.., n]:$ array of registers, initialized to $[0,..,0]$\\ 

$~~$  {\color{red} $H[1,.., n]:$ array of registers, initialized to $[\bot,..,\bot]$}\\ \\

{\bf Operation} $Update(MEM, k, op, args)$:\\

\line{B1} \> \> $t \leftarrow T[i]$ \\

\line{B2} \> \> $T[i] \leftarrow t+1$ \\

\line{B3} \> \> {\color{red} $H[i] \leftarrow Scan(MEM)$}\\

\line{B4} \> \> $MEM[k].op(args)$\\

\line{B5} \> \> $T[i] \leftarrow t+2$ \\

\line{B6} \> \> {\bf return} $OK$ \\

{\bf End} $Update$\\ \\

{\bf Operation} $Scan(MEM)$:\\

\line{B7} \> \> {\color{red} $moves[1,..,n] \leftarrow [0,..,0]$}\\

\line{B8} \> \> {\bf while $\sf true$ do} \\

\line{B9} \> \> \> {\color{red} {\bf if} $\exists j, moves[j] \geq 4$ {\bf then return} $H[j]$ {\bf end if}}\\

\line{B10} \> \> \> $t \leftarrow Collect(T)$\\

\line{B11} \> \> \> $s \leftarrow Collect(MEM)$\\

\line{B12} \> \> \> $t' \leftarrow Collect(T)$\\

\line{B13} \> \> \> {\bf if} $\forall j, t'[j] - t[j] \leq 1$ {\bf then}\\

\line{B14} \> \> \> \> $\ell \leftarrow n - |\{j : t[j] = t'[j] \, \wedge \, t[j] \hbox{ is even}\}|$\\

\line{B15} \> \> \> \> {\bf if} $\ell \leq 1$ {\bf then return} $s$\\

\line{B16} \> \> \> \> {\bf else}\\

\line{B17} \> \> \> \> \> {\bf for} $\lfloor \ell/2 \rfloor$ {\bf times do} \\

\line{B18} \> \> \> \> \> \> {\bf if} $s \neq Collect(MEM)$ {\bf then goto} line~\ref{B22} {\bf end if}\\

\line{B18'} \> \> \> \> \> {\bf end for}  \\

\line{B19} \> \> \> \> \> $t'' \leftarrow Collect(T)$\\

\line{B20} \> \> \> \> \> {\bf if} $t' = t''$ {\bf then return} $s$ {\bf end if}\\

\line{B21} \> \> \> \> \> {\color{red} $t' \leftarrow t''$ }\\

\line{B21'} \> \> \> \> {\bf end ifelse}\\

\line{B21''} \> \> \> {\bf end if}\\

\line{B22} \> \> \> {\color{red} {\bf for each} $j$ {\bf do} $moves[j] \leftarrow moves[j] + t'[j] - t[j]$ {\bf end for} }\\

\line{B23} \> \> {\bf end while} \\

{\bf End} $Scan$
\end{tabbing}
\end{minipage}
  }
\caption{A concurrent-updater algorithm. Local algorithm for process $p_i$.
Local variables are denoted in lowercase letters.
While the algorithm without the lines marked in red is only lock-free, the algorithm with those lines is wait-free. 
In the lock-free algorithm, the goto command in line~\ref{B18} restarts the while~loop,
as line~\ref{B22} is not part of the algorithm.
}
\label{fig:concurrent-updater}
}
\end{figure}%algorithm}
%=================================================================

As explained in the previous section, 
if it is the case that $t[j]$ and $t'[j]$ are equal to the same even value, 
then $p_j$ does not modify $MEM$ when the collect $C$ in line~\ref{B11}, returning $s$, is taken. 
Thus, line~\ref{B14} computes the maximum number~$\ell$ of process that modify $MEM$ during $C$, each of them at most once.
If $\ell$ is at most one, line~\ref{B15}, then $s$ is indeed a snapshot, as argued in the analysis of the solo-updater algorithm.
Otherwise, it might be possible that $s$ is not a snapshot (see Figure~\ref{fig:no-snapshot}). 
In such a case, lines~\ref{B17}-\ref{B18'}, the scanning process 
takes at most~$\lfloor \ell/2 \rfloor$ additional collects of $MEM$, 
expecting all of them to be equal to the first one in $s$ (if not, line~\ref{B18}, the process continues in line~\ref{B22}, which
starts a new loop iteration as that line is removed in the lock-free algorithm).
If so, the process takes a third collect of $T$, line~\ref{B19}, and if this collect is equal to the second collect of $T$
(meaning that no process started a new $Update$ operation in the meantime),
then $s$ is indeed a snapshot, line~\ref{B20}.
The reason is that $MEM$ is modified at most~$\ell$ times while 
 the consecutive and equal $\lfloor \ell/2 \rfloor+1$ collects of $MEM$ are taken, hence, by the pigeonhole principle,
there must be at least one in which $MEM$ is modified at most once.
Thus, the $Update$ operation can be linearized at one of the step of such collect, 
as already argued.

\begin{figure}[t]
\begin{center}
\includegraphics[scale=0.29]{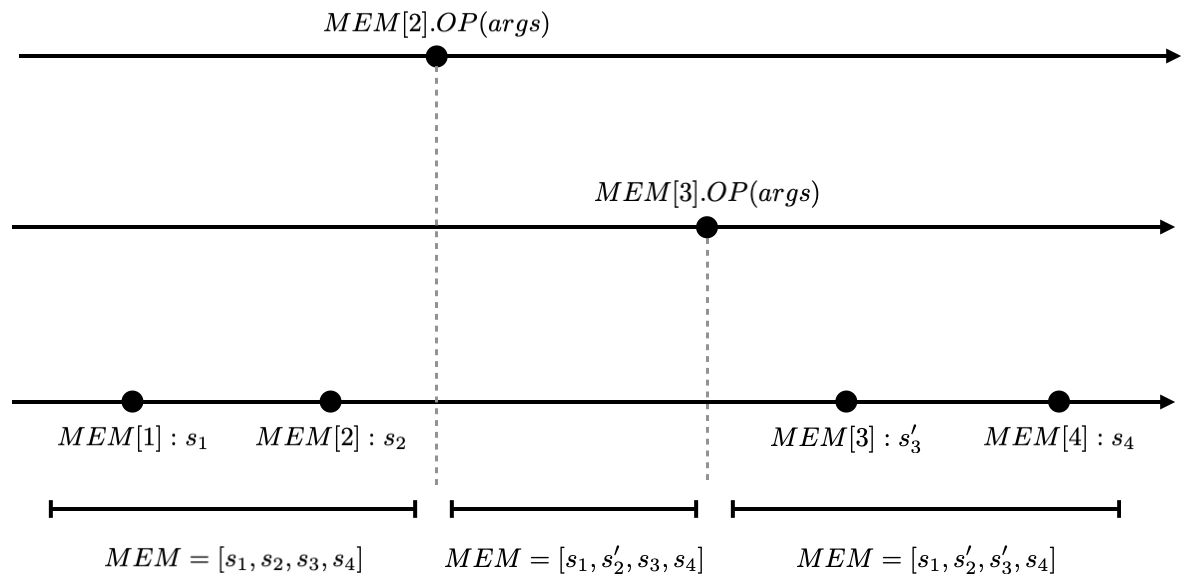}
\caption{If $MEM$ is modified two or more times during a collect, the collect might not be a snapshot. 
In the example, $MEM$ has four objects and while the collect is taken,
 objects $MEM[2]$ and $MEM[3]$ are modified,
changing states from $s_2$ and $s_3$ to $s'_2$ and $s'_3$, respectively. 
Step $MEM[k].read()$ returning $s$ is simply denoted $MEM[k]:s$. 
Observe that the values obtained by the collect, $[s_1,s_2,s'_3,s_4]$,
do not correspond to the object states in $MEM$ at any point in time, hence it is not a snapshot.}
\label{fig:no-snapshot}
\end{center}
\end{figure}

\begin{theorem}
\label{theo:concurrent-updater-lf}
Algorithm~\ref{fig:concurrent-updater} without shared array $H$ and without lines~\ref{B3},~\ref{B7},~\ref{B9},~\ref{B21} and~\ref{B22}
is a read/write, linearizable, lock-free implementation of the RMWable snapshot object.
\end{theorem}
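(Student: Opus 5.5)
The plan follows the template of the proof of Theorem~\ref{theo:solo-updater-lf}, applied now per process. The algorithm is read/write by inspection.

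\textbf{Progress.} $Update$ is clearly wait-free. For lock-freedom, suppose a $Scan$ of a non-crashed process runs forever while only finitely many operations complete. After some point no $Update$ completes. Every process executing an $Update$ has then either crashed or is blocked forever inside $MEM[k].op$, which cannot happen for a linearizable object with a non-crashed caller. Each $T[j]$ therefore changes only finitely often, and eventually all of $T$ is fixed. So is $MEM$: it is modified only at line~\ref{B4}, which is bracketed by the writes to $T[j]$, and a pending $Update$ that never finishes contributes at most one modification. In the next iteration after everything stabilizes, $t=t'=t''$, and every collect of $MEM$ equals $s$. Hence the $Scan$ returns at line~\ref{B15} or~\ref{B20}, a contradiction.

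\textbf{Linearization points.} Each $Update$ that has executed line~\ref{B4} is linearized at that step; pending $Update$ operations that have not done so are discarded. The key invariant is the one used in the solo-updater analysis, now applied to each $T[j]$ separately, since $p_j$ is sequential. Suppose $t'[j]-t[j]\le 1$, where $t[j]$ is read in the collect of line~\ref{B10} and $t'[j]$ in the collect of line~\ref{B12}. Then $p_j$ modifies $MEM$ at most once in the interval $I$ from the read of $T[j]$ in line~\ref{B10} to the read of $T[j]$ in line~\ref{B12}. If moreover $t[j]=t'[j]$ is even, $p_j$ does not modify $MEM$ in $I$ at all. Because $I$ contains the whole collect $C$ of line~\ref{B11}, at most $\ell$ modifications of $MEM$ occur during $C$, each by a distinct process. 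This gives the two return cases.
\begin{itemize}
\item If the $Scan$ returns at line~\ref{B15}, then $\ell\le 1$, and $C$ is a snapshot by the same argument as in Theorem~\ref{theo:solo-updater-lf}: linearize at the read of the modified entry, or at any read of $C$ if nothing is modified.
\item If it returns at line~\ref{B20}, consider the $\lfloor \ell/2\rfloor+1$ collects $C=C_0,C_1,\dots,C_{\lfloor \ell/2\rfloor}$, all returning $s$, and the window $J$ from the start of $C_0$ to the end of the last one.
\end{itemize}

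\textbf{The main obstacle} is the second case: showing that at most $\ell$ modifications occur in $J$, and in particular that no process outside the $\ell$ counted ones modifies $MEM$ in $J$. I would use $t'=t''$ for this. Each $T[j]$ read in line~\ref{B19} is read after $J$. A process with $t[j]=t'[j]=t''[j]$ even cannot have started and finished an update in between without changing $T[j]$. Since $T[j]$ only increases, the value read at line~\ref{B19} equals the one read at line~\ref{B10}, so $T[j]$ is constant from line~\ref{B10} to line~\ref{B19}, and $p_j$ is idle throughout $J$. Similarly, each of the $\ell$ active processes performs at most one modification in that whole span, since its $T[j]$ moves by at most one.

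There are then $\lfloor \ell/2\rfloor+1$ consecutive disjoint collects and at most $\ell$ modifications. I would argue by pigeonhole that some $C_r$ sees at most one modification: if every collect saw at least two, there would be at least $2(\lfloor \ell/2\rfloor+1)>\ell$ modifications. That $C_r$ is a snapshot by the argument of Theorem~\ref{theo:solo-updater-lf}, and it returned $s$. So the $Scan$ is linearized inside $C_r$, which lies within the operation's interval.

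Finally, the chosen points induce a valid sequential RMWable execution. Each $Update$'s effect is exactly its step on $MEM[k]$, and each $Scan$ returns the states of $MEM$ at its linearization point.
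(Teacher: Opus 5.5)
Your proposal is correct and follows the paper's proof. It uses the same linearization points and the same per-process reading of $T[j]$: at most one modification by $p_j$ when $t'[j]-t[j]\le 1$, and none when $t[j]=t'[j]$ is even. It also uses $t'=t''$ in the same way, to extend the bound of $\ell$ modifications over the $\lfloor \ell/2\rfloor+1$ equal collects before applying pigeonhole, and it gives more detail than the paper, notably in the lock-freedom argument.

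One small correction: a non-crashed process cannot block forever inside $MEM[k].op$ because the objects in $MEM$ are assumed wait-free, as the paper implicitly does when it calls $Update$ wait-free. This does not follow from linearizability, which is purely a safety property.
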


\begin{proof}
Obviously the algorithm is read/write. It is clear that $Update$ is wait-free. 
Observe that the only reason a $Scan$ operation of a non-crashed process never finishes, is because 
there are infinitely many $Update$ operations that complete. Thus, the algorithm is lock-free. 

To prove linearizability, we linearize $Update$ operations and $Scan$ operations that return in line~\ref{B15}
as in the linearizability proof of the solo-updater lock-free algorithm.
That is to say, every $Update$ operation that modifies $MEM$ is linearized at its step where it modifies it, i.e., line~\ref{B4}.
As for $Scan$ operations, as argued in the analysis of the solo-updater lock-free algorithm,
in each loop iteration, at most $\ell$ processes ($\ell$ computed in line~\ref{B14}) 
modify $MEM$ during the interval of time
that lapses from the last read of the collect in line~\ref{B10} to the first read of the collect in line~\ref{B12}.
Thus, if the condition in line~\ref{B15} is satisfied, then at most one process modifies $MEM$ 
while the collect $C$ in line~\ref{B4}
that produces~$s$ is taken, implying that $s$ is indeed a snapshot. 
Thus, the $Scan$ operation can be linearized at a step of $C$.

We now deal with the $Scan$ operations that return in line~\ref{B20}. 
Let $op$ be any such operation. First, observe that 
since arrays $t'$ and $t''$ are equal, line~\ref{B20}, 
then $\ell$ still upper bounds the number of processes that modify $MEM$
between the first and third collect of $T$.
More precisely,  at most $\ell$ processes modify $MEM$ during the interval of time
that lapses from the last read of the collect in line~\ref{B10} to the first read of the collect in line~\ref{B19}.
Therefore, by the pigeonhole principle, 
among the consecutive and equal $\lfloor \ell/2 \rfloor+1$ collects taken in that interval, lines~\ref{B11} and~\ref{B18},
there is at least one, say $C$, in which $MEM$ is modified at most once. As observed above, this implies that $s$ indeed
a snapshot of $MEM$ and $op$ can be linearized at a step of $C$.

By construction, the chosen linearization points induce a valid sequential execution of the RMWable sequential object.
Moreover, the linearization point of every operation lies between the invocation and responde of the operation.
Therefore, every execution of the algorithm is linearizable.
\qed
\end{proof}

The algorithm that results of removing lines~\ref{B16}-\ref{B21} from the
lock-free algorithm is linearizable but \emph{blocking}, namely, 
it has executions where, from some time on, no operation completes. This blocking algorithm
can be understood as the ``Cartesian product" of $n$ copies of the lock-free solo-updater solution.
Thus, effectively, lines~\ref{B16}-\ref{B21} are the mechanism that provides lock-freedom.

Now, shared array $H$ together with lines~\ref{B3},~\ref{B7},~\ref{B9},~\ref{B21} and~\ref{B22} implement a helping mechanism
that makes Algorithm~\ref{fig:concurrent-updater} wait-free. Helping is very similar to that in the solo-updater algorithm 
in the previous section. In $Update$, the updating process $p_i$ takes a snapshot, this time through $Scan$, that is stored in $H[i]$, line~\ref{B3}.
In $Scan$, the scanning process $p_i$ records in $moves[j]$
the number of times $p_j$ has incremented $T[j]$, from its perspective, 
since the beginning its current operation, line~\ref{B22}; 
if at some time this number is at least four, $p_i$ borrows the snapshot of $p_j$ in $H[j]$, line~\ref{B9}.
As observed in the analysis of the solo-updater algorithm, the condition in line~\ref{B9} guarantees that
the snapshot stored in $H[j]$ is taken by a $Update$ operation (of $p_j$) that is nested in current $p_i$'s $Scan$ operation,
which is crucial for linearizability.

\begin{theorem}
\label{theo:concurrent-updater-wf}
Algorithm~\ref{fig:concurrent-updater} is a read/write, linearizable, 
wait-free implementation of the RMWable snapshot object.
\end{theorem}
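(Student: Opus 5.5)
We prove the three properties separately. Being read/write is immediate: besides invoking operations on the objects of $MEM$ (part of the input, not of the implementation), every step reads or writes a register of $T$ or $H$. The substantive parts are wait-freedom, where $Update$ now calls $Scan$ recursively in line~\ref{B3}, and linearizability of $Scan$ operations that return $H[j]$ in line~\ref{B9}.

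\emph{Wait-freedom.} We first show that every $Scan$ executed by a non-crashed process completes, by bounding its loop iterations. Consider an iteration that does not return. Either the test in line~\ref{B13} fails, so some $t'[j]-t[j]\ge 2$ and $moves[j]$ grows by at least $2$ in line~\ref{B22}. Or it passes but the $Scan$ does not return in line~\ref{B15} or line~\ref{B20}. If this happens because some extra collect in line~\ref{B18} differs from $s$, then some process modified $MEM$ during that interval. That process had an odd $T[j]$ value not counted as ``even and equal''. We then argue that, after possibly refreshing $t'$ via line~\ref{B19} and line~\ref{B21}, some $T[j]$ has changed since $t$, so $\sum_j (t'[j]-t[j])\ge 1$. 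The delicate point is the goto path: line~\ref{B22} uses the $t'$ from line~\ref{B12}, which may equal $t$. We handle it by charging the modification to the next iteration. Some $p_j$ with odd $t[j]=t'[j]$ completes line~\ref{B4} and then increments $T[j]$ in line~\ref{B5} before any later collect of $T$ sees it. If instead the test in line~\ref{B20} fails, line~\ref{B21} ensures $moves$ strictly increases. In every case, each non-returning iteration, or each bounded block of iterations, increases $\sum_j moves[j]$ by at least one. Once the sum reaches $3n+1$, some $moves[j]\ge 4$ and line~\ref{B9} returns. Each iteration runs at most $\lfloor n/2\rfloor+3$ collects, so $Scan$ is wait-free. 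Since $Update$ performs one $Scan$ plus a constant number of steps, $Update$ is wait-free too. There is no unbounded recursion, because $Scan$ never calls $Update$.

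\emph{Linearizability.} $Update$ operations, and $Scan$ operations returning in line~\ref{B15} or line~\ref{B20}, are linearized exactly as in Theorem~\ref{theo:concurrent-updater-lf}. That argument used only $T$-increments and collects of $MEM$, and the extra helping code does not change when $T$ is written, so it carries over. $Scan$ operations returning in line~\ref{B9} are handled by strong induction on the return time. Let $op$ return $H[j]$ with $moves[j]\ge 4$. Each increment of $moves[j]$ corresponds to a distinct write to $T[j]$ observed between the first collect of $T$ in $op$ and the collect that produced the last $t'$, so $p_j$ wrote $T[j]$ at least four times in that interval. Hence at least one whole $Update$ of $p_j$, from its line~\ref{B2} to its line~\ref{B5}, lies inside $op$. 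The value $H[j]$ read afterwards was therefore written by an $Update$ of $p_j$ whose line~\ref{B3} started after $op$ started, either that $Update$ or a later one. It was written before $op$ read it, so the inner $Scan$ producing it is nested in $op$. That inner $Scan$ returned earlier, so by induction it has a linearization point inside its own interval at which its output equals the state of $MEM$. We linearize $op$ at that same point, which lies within $op$'s interval.

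The main obstacle is the wait-freedom counting, particularly the goto path from line~\ref{B18}, where line~\ref{B22} uses the stale $t'$ from line~\ref{B12}. Our first attempt is a potential argument over two consecutive iterations. A differing collect forces some updater with odd $T[j]$ to finish line~\ref{B4} and then write $T[j]$, and the next collect of $T$ records that write. A second concern is ties in linearization points between $op$ and the inner $Scan$ it borrows from. We break these by ordering $op$ immediately after the inner $Scan$, which preserves validity because both return the same array.
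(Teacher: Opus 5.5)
\textbf{Linearizability.} This part is essentially the paper's argument. Four counted increments of $T[j]$ force a complete $Update$ of $p_j$ inside $op$, so the borrowed $Scan$ is nested in $op$. An induction then places that $Scan$'s linearization point inside $op$; you induct on return time, the paper on helping depth. This part is fine.

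\textbf{Wait-freedom.} The gap is exactly at the goto path you flag, and no sharper potential argument can close it. The array $moves$ only accumulates $t'[j]-t[j]$ within a single iteration. So a write to $T[j]$ that falls between one iteration's last collect of $T$ and the next iteration's line~\ref{B10} is never counted. Your charging step fails precisely there: the ``later collect'' that sees $p_j$'s write in line~\ref{B5} is the next line~\ref{B10}, which becomes the new baseline $t$. Two further slips:
\begin{itemize}
\item On the goto path, lines~\ref{B19}--\ref{B21} are skipped, so $t'$ is never refreshed.
\item The process whose step makes the extra collect differ need not have odd $T[j]$. A process with an even, unchanged value can start a fresh $Update$ after line~\ref{B12}.
\end{itemize}

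\textbf{A schedule where the $Scan$ never returns.} Take $n=2$ and $MEM$ a single counter, with $p_1$ inside the $Scan$ of its own $Update$, so $T[1]=1$. Repeat the following forever.
\begin{itemize}
\item Right before each of $p_1$'s executions of line~\ref{B10}, $p_2$ finishes its current $Update$ if it has one (line~\ref{B5}). It starts a new one (lines~\ref{B1}--\ref{B2}), so $T[2]$ is odd.
\item $p_2$ runs its whole $Scan$ of line~\ref{B3} solo. This returns in line~\ref{B20}, since $t=t'=t''$ and its one extra collect is unchanged. $p_2$ then stops just before line~\ref{B4}.
\item $p_1$ runs lines~\ref{B10}--\ref{B12} with $t=t'$ and both entries odd, so $\ell=2$.
\item $p_2$ performs its increment in line~\ref{B4} just before $p_1$'s single extra collect in line~\ref{B18} reads the counter. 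That collect differs from $s$, and line~\ref{B22} adds $t'-t=0$.
\end{itemize}
Hence $moves$ stays $[0,0]$ and line~\ref{B9} never fires. $p_1$ takes infinitely many steps without completing, while $p_2$ completes infinitely many $Update$s.

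\textbf{Consequence.} Algorithm~\ref{fig:concurrent-updater} as written is not wait-free. The paper's proof does not rescue you: it just asserts the bound of $4\cdot 2\cdot(n-1)$ iterations and overlooks the same path.

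Proving the theorem requires changing the algorithm. One option is to measure $moves[j]$ against the first collect of $T$ in the $Scan$, so that no write to $T[j]$ is lost. Two line-\ref{B4} steps of $p_j$ are always separated by two writes to $T[j]$. So every goto iteration can then be charged to the process whose line-\ref{B4} step caused the differing collect, and a bound linear in $n$ on the number of iterations follows.
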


\begin{proof}
The algorithm is clearly read/write. Due to the condition in line~\ref{B9},
every $Scan$ operation executes the main loop at most $4 \cdot 2 \cdot (n-1)$ times, 
as line~\ref{B9} indicates that the snapshot of a updater can be taken if it incrementes its entry in $T$ at least four times. 
Hence the algorithm is wait-free.

We now prove linearizability. $Update$ operations and $Scan$ operations that return in lines~\ref{B15} or~\ref{B20} 
are linearized as in the proof of Theorem~\ref{theo:concurrent-updater-lf}.
$Scan$ operations that return in line~\ref{A9} are treated as follows.
Consider any such operation $op$ that returns array $s'$, read from $H[j]$. 

First, we argue that there is an $Update$ operation $op'$ of $p_j$ 
that is nested in $op$ and whose $Scan$ operation in line~\ref{B3} returns $s'$.
Since the condition in line~\ref{B9} is true, $T[j]$ is incremented at least four times
from the beginning of $op$ until $p_i$ reads $s'$ from $H[j]$. 
As every complete $Update$ operation increments its corresponding entry of $T$ two times, 
it follows that the snapshot read by $p_i$ in line~\ref{B9} is taken 
by the $Scan$ in an $Update$ operation $op'$ of $p_j$ that is nested in~$op$. 

We now argue that the linearization point of the snapshot produced by the $Scan$ in operation $op'$
lies between the invocation and response of $op'$.
We do this by induction on the depth $d$ of helping. 
Depth is $d=0$ if the $Scan$ in $op'$ takes its snapshot on its own (i.e. returns in line~\ref{B15} or~\ref{B20}),
depth is $d=1$ if the $Scan$ in $op'$ borrows a snapshot (i.e. returns in line~\ref{B9}) taken by 
the $Scan$ in a $Update$ that takes its snapshot by its own (i.e. with depth 0).
In general, depth is $d$ if the $Scan$ in $op'$ borrows a snapshot taken by the $Scan$ in a $Update$ 
with depth $d-1$.

For $d = 0$ the claim is obvious as the snapshot is taken by the $Scan$ operation on its own.
Consider now depth $d > 0$. By definition, the $Scan$ in $op'$ borrows a snapshot taken by the 
$Scan$ in a $Update$ operation $op''$ with depth $d-1$. By induction hypothesis,
the linearization point of the snapshot taken in $op''$ lies between the invocation and response of $op''$.
We have already shown that $op''$ is nested in $op'$. Hence the induction step follows.

Finally, the operation $op$ is linearized at the linearization point of the $Scan$ operation executed by $op'$.
This completes the linearization proof, because the proof of Theorem~\ref{theo:concurrent-updater-lf} already showed
that the linearization points of $Update$ operations and $Scan$ operations that return in lines~\ref{B15} or~\ref{B20} 
induce a valid sequential execution of the RMWable snapshot object.
Therefore, the implementation is linearizable.
\qed
\end{proof}

\section{A solution in the unbounded concurrency model}
\label{sec:unbounded}

Next we consider a model with unbounded concurrency~\cite{MerrittT13}, 
where there are infinitely many processes, $p_1, p_2, \hdots$,
and in every infinite execution, only a finite number of processes participate in any prefix.
It has been shown that collects are possible in this model of computation~\cite{GafniMT01}.
Algorithm~\ref{fig:unbounded-concurrency} below adapts the wait-free concurrent-updater algorithm in the previous
section to the context of the unbounded concurrency model. The algorithm makes the following assumptions:

\begin{itemize}
\item The first step of a process $p_i$ in an execution writes $0$ in $T[i]$,
which announces the process participates in the execution.

\item We can assume the existence of a function $Collect(T)$ that implements a 
collect of infinite array $T$~\cite{GafniMT01}. 
We also can assume that the function returns a set~$t$ with a pair $(j,v)$
for every participating process $p_j$, where $v \geq 0$ is the value the collect reads from $T[j]$.
We let $t.ids$ denote the set with the process indexes
appearing in $t$, namely, the set $\{j : (j, -) \in t \}$, 
and for every $j \in t.ids$, we let $t[j]$ denote the value $v \geq 0$ such that $(j,v) \in t$.
\end{itemize}

Intuitively, the main difficulty in adapting Algorithm~\ref{fig:concurrent-updater} to the unbounded concurrency model is that,
in every iteration of a $Scan$ operation, new processes might join the execution, start $Update$ operations and modify $MEM$
in a way that none of the collects in lines~\ref{B11} and~\ref{B18} is a snapshot, and then just crash.
This way the $Scan$ operation never sees the condition in line~\ref{B9} true, hence running forever.

$Scan$ of Algorithm~\ref{fig:unbounded-concurrency} deals with this issue by treating 
processes that join the execution in the course of the operation differently from
those that are in the execution before the operation starts. The helping mechanism is enhanced with the
following simple observation: if a process $p_j$ joins the execution and starts an $Update$  
after a $Scan$ operation of $p_i$ starts,
then $p_i$ can return $H[j]$ at any time $H[j] \neq \bot$, regardless of the value of $T[j]$.
Enhancing the helping mechanism this way preservers wait-freedom of $Scan$,
as in the scenario described above, the processes joining the execution
and preventing an $Scan$ to take an snapshots by itself, first take a snapshot of $MEM$ before modifying it,
hence unblocking the $Scan$ operations that they might have affected.

 %==================================================================
\begin{figure}[ht]%algorithm}[htb]
\centering{ \fbox{
\begin{minipage}[t]{150mm}
\scriptsize
\renewcommand{\baselinestretch}{2.5} \resetline
\begin{tabbing}
aaa\=aa\=aa\=aa\=aa\=aa\=aa\=\kill %~\\

{\bf Shared Variables:}\\

$~~$  $T[1,2,..]:$ array of registers, initialized to $[-1,-1,..]$\\ 

$~~$  $H[1,2,..]:$ array of registers, initialized to $[\bot,\bot,..]$\\ \\

{\bf When process $p_i$ joins the execution:}\\

\> $T[i] \leftarrow 0$ \\ \\

{\bf Operation} $Update(MEM, k, op, args)$:\\

\line{C1} \> \> $t \leftarrow T[i]$ \\

\line{C2} \> \> $T[i] \leftarrow t+1$ \\

\line{C3} \> \> $H[i] \leftarrow Scan(MEM)$\\

\line{C4} \> \> $MEM[k].op(args)$\\

\line{C5} \> \> $T[i] \leftarrow t+2$ \\

\line{C6} \> \> {\bf return} $OK$ \\

{\bf End} $Update$\\ \\

{\bf Operation} $Scan(MEM)$:\\

\line{C7} \> \> $moves[1,2,..] \leftarrow [0,0,..]$\\

\line{C8} \> \> $init, curr \leftarrow \emptyset$\\

\line{C9} \> \> {\bf while $\sf true$ do} \\

\line{C10} \> \> \> {\bf if} $\exists j \in curr, moves[j] \geq 4 \, \vee \, (j \notin init \, \wedge \, H[j] \neq \bot)$ {\bf then return} $H[j]$ {\bf end if}\\

\line{C11} \> \> \> $t \leftarrow Collect(T)$\\

\line{C12} \> \> \> $s \leftarrow Collect(MEM)$\\

\line{C13} \> \> \> $t' \leftarrow Collect(T)$\\

\line{C14} \> \> \> {\bf if} $init = \emptyset$ {\bf then} $init \leftarrow t$ {\bf end if}\\

\line{C15} \> \> \> {\bf if} $(\forall j \in t.ids :  t'[j] - t[j] \leq 1) \, \wedge \, (\forall j \in t'.ids \setminus t.ids : t'[j] \leq 2)$ {\bf then}\\

\line{C16} \> \> \> \> $\ell \leftarrow |t'.ids| - |\{ j \in t.ids :  t[j] = t'[j] \, \wedge \, t[j] \hbox{ is even} \}| - |\{ j \in t'.ids \setminus t.ids : H[j] = \bot \}|$\\

\line{C17} \> \> \> \> {\bf if} $\ell \leq 1$ {\bf then return} $s$\\

\line{C18} \> \> \> \> {\bf else}\\

\line{C19} \> \> \> \> \> {\bf for} $\lfloor \ell/2 \rfloor$ {\bf times do} \\

\line{C20} \> \> \> \> \> \> {\bf if} $s \neq Collect(MEM)$ {\bf then goto} line~\ref{C27} {\bf end if}\\

\line{C21} \> \> \> \> \> {\bf end for} \\

\line{C22} \> \> \> \> \> $t'' \leftarrow Collect(T)$\\

\line{C23} \> \> \> \> \> {\bf if} $t' = t''$ {\bf then return} $s$ {\bf end if}\\

\line{C24} \> \> \> \> \> $t' \leftarrow t''$\\

\line{C25} \> \> \> \> {\bf end ifelse}\\

\line{C26} \> \> \> {\bf end if} \\

\line{C27} \> \> \> {\bf for each} $j \in t'.ids$ {\bf do}\\

\line{C28} \> \> \> \> {\bf if} $j \in t.ids$ {\bf then} $moves[j] \leftarrow moves[j] + t'[j] - t[j]$ {\bf end if}\\

\line{C29} \> \> \> \> {\bf if} $j \in t'.ids \setminus t.ids$ {\bf then} $moves[j] \leftarrow moves[j] + t'[j]$ {\bf end if}\\

\line{C30} \> \> \> {\bf end for}\\

\line{C31} \> \> \> $curr \leftarrow t'$\\

\line{C32} \> \> {\bf end while} \\

{\bf End} $Scan$\\

\end{tabbing}
\end{minipage}
  }
\caption{A wait-free RMWable snapshot algorithm for the unbounded concurrency model. Local algorithm for process $p_i$. Local variables of the process are denoted in lowercase letters.}
\label{fig:unbounded-concurrency}
}
\end{figure}%algorithm}
%=================================================================

In Algorithm~\ref{fig:unbounded-concurrency}, while $Update$ remains the same,
$Scan$ now operates as follows (we just explain the main differences with $Update$ of Algorithm~\ref{fig:concurrent-updater}).
The scanning process $p_i$ keeps in a local set $init$ the set of process indexes in the first collect it takes, line~\ref{C14}.
From the point of view of $p_i$, this is the set of processes that are in the execution before its current operation starts.
Process $p_i$ stores in $curr$ the index processes that are currently in the execution; this set is kept up to date at the end
of every iteration of the main loop, line~\ref{C31}.

The helping mechanism is enriched in line~\ref{C10}. 
The snapshot in $H[j]$ of a process $p_j$ can be taken if $p_j$ increments $T[j]$
at least four times (as in Algorithm~\ref{fig:concurrent-updater}); 
the second case when $H[j]$ can be taken is if $p_j$ starts 
after the $Update$ operation starts (namely, $j \notin init$) and there is already a snapshot in $H[j]$.

Due to the new process that might join and execution, in an iteration of the loop, 
process indexes $t.ids$ and $t'.ids$ might be distinct, 
however, it is always the case that $t.ids \subseteq t'.ids$. It similarly occurs with $t''.ids$.
Recall that the aim of condition in line~\ref{B13} of Algorithm~\ref{fig:concurrent-updater} is to detect
that every process modifies $MEM$ at most once while the collect of $MEM$ in line~\ref{B11} is taken.
Condition in line~\ref{C15} of Algorithm~\ref{fig:unbounded-concurrency} implements the same idea, 
taking into account the new processes joining the execution between
collects in lines~\ref{C11} and~\ref{C13}, namely, for every $j \in t.ids \setminus t'.ids$, 
$t'[j]$ has to be no more than two.
Line~\ref{C16} computes the number $\ell$ of processes that modify $MEM$ at most once while
the collect $C$ in line~\ref{C12} is taken. 
Since in $Update$ a process takes a snapshot first and then modifies $MEM$,
for any new process $p_j$ (i.e., with $j \in t.ids \setminus t'.ids$), it is certain that  if $H[j] = \bot$,
$p_j$ does not modifies $MEM$ while $C$ is taken.
The rest of the operation is basically the same. If $\ell$ is no more than one, 
then $s$, taken in line~\ref{C12}, is a snapshot, hence is returned, line~\ref{C17};
otherwise, the process takes $\lfloor \ell/2 \rfloor$ collects, expecting all of them to be equal to~$s$, lines~\ref{C19}-\ref{C21},
and takes one more collect of $T$, line~\ref{C22}, and if all processes are in the same state, namely,
sets $t'$ and $t''$ are equal, 
then $s$ is a snapshot, line~\ref{C23}.
At the end of the iteration, local array $moves$ is updated, lines~\ref{C27}-\ref{C30}, 
treating slightly different the processes that joined the computation during the current iteration, line~\ref{C29}.

\begin{theorem}
\label{theo:unbounded-concurrency-wf}
Algorithm~\ref{fig:unbounded-concurrency} is a read/write, linearizable, 
wait-free implementation of the RMWable snapshot object in the unbounded concurrency model of computation.
\end{theorem}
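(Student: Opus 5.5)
We mirror the proof of Theorem~\ref{theo:concurrent-updater-wf}, splitting the argument into three parts: the algorithm is read/write, it is linearizable, and it is wait-free. Read/write is immediate, since the only shared accesses are reads and writes of $T$, $H$ and the $Collect$ of~\cite{GafniMT01}. Linearization points are chosen as before. An $Update$ that modifies $MEM$ is linearized at line~\ref{C4}. A $Scan$ returning in lines~\ref{C17} or~\ref{C23} is linearized at a step of a collect during which $MEM$ is modified at most once. A $Scan$ returning in line~\ref{C10} inherits the linearization point of the borrowed snapshot, justified by induction on helping depth exactly as in Theorem~\ref{theo:concurrent-updater-wf}.

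For the self-taken snapshots, the new ingredient is the treatment of processes in $t'.ids\setminus t.ids$; the old processes are handled exactly as before. Let $p_j$ be new. It wrote $T[j]\leftarrow 0$ after the scanner's read of $T[j]$ in line~\ref{C11}, so every $Update$ of $p_j$ starts after that read. The condition $t'[j]\le 2$ implies $p_j$ started at most one $Update$ before the read in line~\ref{C13}, so it modifies $MEM$ at most once in the interval. Moreover, $p_j$ writes $H[j]$ before line~\ref{C4}. Hence if $H[j]=\bot$ is read in line~\ref{C16}, after the collect $C$, then $p_j$ did not modify $MEM$ during $C$. (To be careful, the argument should use that $H[j]$ is read after the relevant interval.) Thus $\ell$ upper-bounds the number of processes modifying $MEM$, each at most once, between the first read of line~\ref{C11} and line~\ref{C13}. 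If $t'=t''$ in line~\ref{C23}, this bound extends to line~\ref{C22}, and the pigeonhole argument of Theorem~\ref{theo:concurrent-updater-lf} applies unchanged.

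For borrowing in line~\ref{C10}, there are two cases. In the case $moves[j]\ge 4$, we argue as in Theorem~\ref{theo:concurrent-updater-wf}. Line~\ref{C29} counts all increments of a newly joined process from $0$, and those increments happen after the scan began. So four increments witness a complete $Update$ of $p_j$ nested in the scan, and its $H[j]$ write precedes the read. In the case $j\notin init$ with $H[j]\neq\bot$, process $p_j$ joined after the scanner's first collect of $T$. Hence every $Update$ of $p_j$, and in particular the inner $Scan$ that produced the value read from $H[j]$, started after the scanner's invocation and completed before the read. This gives the nesting needed for the induction on helping depth. Well-foundedness of the induction follows because each borrowed snapshot comes from a strictly nested operation.

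The main obstacle is wait-freedom, since in this model $n$ is unbounded. Our plan is to fix a $Scan$ by $p_i$ and let $I=init$, which is finite. We claim that each failed iteration forces progress of some process in $curr$. Either some old process advanced, contributing to $moves$, or a new process appeared whose contribution ends the scan at the next line~\ref{C10}. The latter happens once its $H[j]\neq\bot$, and a new process that modifies $MEM$ must have written $H[j]$ first. Failure of line~\ref{C15} or a changed collect in line~\ref{C20} implies some process in $t'.ids$ performed an $Update$ step, either a $T$ increment or a $MEM$ modification preceded by an $H$ write. Failure of line~\ref{C23} is an increment. For a new process, either $H[j]\neq\bot$ already, so the next line~\ref{C10} returns, or it is counted among the at most $\ell$ modifiers. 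The delicate point is failure of line~\ref{C20} caused by a new process $p_j$ that had $H[j]=\bot$ at line~\ref{C16}. Here $p_j$ wrote $H[j]$ before modifying $MEM$, so $H[j]\neq\bot$ at the next line~\ref{C10}, and $j\in curr$ after line~\ref{C31}. The scan therefore returns unless that write happened after the modification, which is impossible. So every iteration that does not return either ends the scan at the next check or increments $moves[j]$ for some $j\in I$. Since $|I|$ is finite and each $moves[j]$ can grow at most $3$ before triggering line~\ref{C10}, the number of iterations is bounded by roughly $O(|I|)$, and each iteration takes finitely many steps because collects in this model are wait-free.
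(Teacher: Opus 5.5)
Your wait-freedom argument rests on one claim: every iteration that does not return either makes line~\ref{C10} fire at the next check, or increases $moves[j]$ for some $j\in I$. This claim is false in two ways.

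\emph{Old-process modifications are not charged.} A failure at line~\ref{C20} can be caused by a process $p_j$ with $j\in I$ that is in the middle of an $Update$, so $t[j]=t'[j]$ is odd. Its line~\ref{C4} lands between the collect of line~\ref{C12} and the collect of line~\ref{C20}. Line~\ref{C28} then adds $t'[j]-t[j]=0$, and the $H[j]$ clause of line~\ref{C10} is unavailable because $j\in init$. Since line~\ref{C28} only accumulates differences \emph{inside} an iteration, $p_j$ can then execute line~\ref{C5} and start its next $Update$ (lines~\ref{C1}, \ref{C2} and a nested $Scan$ run solo) in the gap before the scanner's next line~\ref{C11}, and repeat. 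Add a third process that crashed after line~\ref{C2}, so that $\ell=2$ in every iteration. Then $moves$ stays identically $0$ and the scan never returns.

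\emph{``Failure of line~\ref{C23} is an increment'' is wrong.} $t''\neq t'$ also when $t''.ids$ contains a process that has merely joined ($T[j]\leftarrow 0$) between lines~\ref{C13} and~\ref{C22}. Line~\ref{C29} adds $0$ to $moves[j]$, and if that process only scans or crashes, $H[j]=\bot$ forever. Take two processes stuck mid-$Update$ (keeping $\ell=2$) plus one fresh joiner per iteration. This is legal under unbounded concurrency, since every prefix has finitely many participants, and the scan loops forever.

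These are non-terminating executions of the code as written, so no sharper charging argument can close the gap; the algorithm itself has to change. For example, $moves[j]$ should count all increments of $T[j]$ since the scanner first observed it, and line~\ref{C23} must not be defeated by joiners that have not started modifying $MEM$. The paper's proof takes a different route: it argues by contradiction, reducing the case of finitely many joiners to Algorithm~\ref{fig:concurrent-updater}, and arguing that infinitely many joiners must write $H$ before modifying $MEM$. It overlooks the same two points. The first problem already breaks the loop bound claimed in Theorem~\ref{theo:concurrent-updater-wf}, and the second shows joiners can block a scan without touching $MEM$.

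Separately, your linearizability step ``if $t'=t''$ the bound extends to line~\ref{C22}'' fails for a new $p_j$ with $t'[j]=1$ and $H[j]=\bot$ read at line~\ref{C16}. That process is excluded from $\ell$, yet it can write $H[j]$ and execute line~\ref{C4} during the collects of line~\ref{C20} while $T[j]$ stays $1$. Your parenthetical worry is exactly this: the read of $H[j]$ precedes, rather than follows, the extra collects. So the pigeonhole count, with $Update$s linearized at line~\ref{C4}, is not justified.
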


\begin{proof}
The linearizability proof is basically the same as the one for Algorithm~\ref{fig:concurrent-updater} given above. 
The only extra observation is that if a $Scan$ operation $op$ returns $H[j]$ in line~\ref{C10}, where
$j \notin init$, then \emph{all} $Update$ operation of $p_j$ so far are nested in $op$, regardless of the value of $moves[j]$.

Thus, we just need to argue that $Scan$ is wait-free, which proves that $Update$ remains wait-free.
By contradiction, suppose that there is a $Scan$ 
operation of correct process $p_i$ that never terminates in an execution. This is not possible if from
some time on no new process joins the execution, as in this case the behavior of the algorithm is basically the
same as that of Algorithm~\ref{fig:concurrent-updater}, which we already know is wait-free.
Thus, it must be that infinitely many processes join the computation,
and modify $MEM$ while the infinitely many collects in lines~\ref{C12} and~\ref{C20} are taken,
in a way that condition in lines~\ref{C10} and~\ref{C17} are never true.
But in $Update$, the updating process first takes a snapshot, storing it in~$H$, and then modifies $MEM$.
Therefore, eventually $p_i$ observes that there is a $j \notin init$ such that $H[j] \neq \bot$,
hence terminating its $Scan$ operation in line~\ref{C10}. A contradiction.
This completes the proof of the theorem.
\qed
\end{proof}

\section{Final discussion}
\label{sec:final}

A typical property of interest of concurrent shared memory algorithms is that of using shared registers of small size. 
This is the case of Algorithm~\ref{fig:concurrent-updater}.\footnote{In this section we do not
consider Algorithm~\ref{fig:unbounded-concurrency}, since the vast majority of snapshot 
and RMWable snapshot algorithms are designed for the model with finite $n$, 
and our aim is to compare our work to those algorithms.}
Each $T[i]$ is just a counter, hence if it is assumed that it can store $O(\log b)$ bits, for some reasonable $b > 0$,
it lets $p_i$ to execute $O(2^b)$ operations before it overflows.\footnote{In any real-world system, 
setting $b$ to 64, or even 32, would enough for any practical matter.}
Each $H[i]$ can be only a pointer to the current helping snapshot of~$p_i$
represented as a data structure (e.g., a linked list).

We have studied \emph{read/write} wait-free RMWable snapshots from a computability perspective, 
leaving efficiency on the side. It is not difficult to prove that 
the step complexity of Algorithm~\ref{fig:concurrent-updater} is polynomial on $n$ and $m$, 
the number of processes and the size of $MEM$, respectively. 
Specifically, it is $O(n^2 m)$.
In general, polynomial step complexity is considered high in the context of snapshot algorithms.
We would like to improve step complexity of this algorithm.
It would be interesting to explore if Algorithm~\ref{fig:concurrent-updater} can be modified to provide 
\emph{partial} snapshots~\cite{AttiyaGR08,ImbsR12} with $Scan$ operations whose step complexity 
adapts to the number of elements of $MEM$ that are snapshotted.
Another interesting future direction is to explore 
if splitting $Scan$ into two operations, one that takes the snapshot but does not return it, 
and another that retrieves the actual snapshot,
as in~\cite{BashariCW24,BashariW21,JayantiJJ24,WeiBBFR021}, 
enables improvements in step complexity.

As for space, the complexity of Algorithm~\ref{fig:concurrent-updater} is $O(n^2 + nm)$.
When $n < m$, this is better that the step complexity of the RMWable snapshot algorithm in~\cite{BashariCW24},
whose space complexity is $O(nm \log n)$. Space complexity of the algorithm in~\cite{JayantiJJ24}
in only $O(m)$, but it allows only one process to perform $Scan$ at a time.
Algorithm~\ref{fig:concurrent-updater} can be modified to
have step complexity $O(nm)$, for any values of $n$ and $m$, at the cost of increasing step complexity.
It would be interesting to improve space complexity of Algorithm~\ref{fig:concurrent-updater}
and at the same time improving step complexity too.

\bibliographystyle{plain}
\bibliography{references}

\end{document}